\newtheorem{theorem}{Theorem}
\newtheorem{definition}{Definition}
\newtheorem{lemma}{Lemma}
\newtheorem{prop}{Proposition}
\newcommand{\forset}{\mathsf{F}}
\newcommand{\argb}{\mathsf{B}}
\newcommand{\krel}{\mathcal{R}}
\newcommand{\sdef}{\rightsquigarrow}
\newcommand{\model}{(W,\belset, \sow, \rulset,\nf, ||\cdot||)}
\newcommand{\ground}{GE(AF^{M})}
\newcommand{\black}{\color{black}}
\newcommand{\defeat}{\mathsf{defeat}}
\newcommand{\ra}{\rangle}
\newcommand{\la}{\langle}
\newcommand{\defrule}{\mathsf{DefRule}}
\newcommand{\trule}{\mathsf{TopRule}}
\newcommand{\strict}{\mathsf{strict}}
\newcommand{\rebut}{\mathsf{rebuts}}
\newcommand{\po}{\mathsf{P1}}
\newcommand{\pt}{\mathsf{P2}}
\newcommand{\oms}{\mathsf{At}}
\newcommand{\sto}{\mathsf{\twoheadrightarrow}}
\newcommand{\well}{\mathsf{wellshap}}
\newcommand{\lanab}{\mathcal{L}}
\newcommand{\seq}{\mathsf{SEQ}(\forset)}
\newcommand{\ow}{\mathsf{aware}}
\newcommand{\undercut}{\mathsf{undercuts}}
\newcommand{\accept}{\mathsf{accept}}
\newcommand{\sub}{\mathsf{sub_A}}
\newcommand{\prem}{\mathsf{Prem}}
\newcommand{\conc}{\mathsf{conc}}
\newcommand{\sconc}{\mathsf{Conc}}
\newcommand{\tr}{\mathsf{TopRule}}
\newcommand{\belset}{\mathcal{B}}
\newcommand{\rulset}{\mathcal{D}}
\newcommand{\kmodel}{(W,\krel, \sow, \rulset,\nf, ||\cdot||)}
\newcommand{\rul}{R}
\newcommand{\schrul}{((\varphi_1,...,\varphi_n),\varphi)}
\newcommand{\sow}{\mathcal{O}}
\newcommand{\nf}{\mathfrak{n}}
\newcommand{\argset}{\mathsf{A}}
\newcommand{\kt}{\models_{k}}
\newcommand{\pre}{\mathsf{pre}}
  \newcommand{\welset}{WS}
 \newcommand{\kmclass}{\mathcal{K}}
  \newcommand{\canmod}{S^{c}}
\newcommand{\ang}[1]{\langle #1 \rangle}
\newcommand{\strictalpha}{\ang{\alpha_1,...,\alpha_n \sto \varphi}}
\newcommand{\defalpha}{\ang{\alpha_1,...,\alpha_n \Rightarrow \varphi}}
\tikzset{
modal/.style={>=stealth',shorten >=1pt,shorten <=1pt,auto,node distance=1.5cm,
semithick},
world/.style={circle,draw,minimum size=0.5cm,fill=gray!15},
point/.style={circle,draw,inner sep=0.5mm,fill=black},
reflexive above/.style={->,loop,looseness=7,in=120,out=60},
reflexive below/.style={->,loop,looseness=7,in=240,out=300},
reflexive left/.style={->,loop,looseness=7,in=150,out=210},
reflexive right/.style={->,loop,looseness=7,in=30,out=330}
}
\title{An Awareness Epistemic Framework for\\ Belief, Argumentation and Their Dynamics}
\author{Alfredo Burrieza \qquad\qquad Antonio Yuste-Ginel
\institute{Department of Philosophy, \\University of M\'alaga, Spain}
\email{\quad burrieza@uma.es \quad\qquad antonioyusteginel@gmail.com}
}
\begin{document}
\maketitle

\begin{abstract}
The notion of argumentation and the one of belief stand in a problematic relation to one another. On the one hand, argumentation is crucial for belief formation: as the outcome of a process of arguing, an agent might come to (justifiably) believe that something is the case. On the other hand, beliefs are an input for argument evaluation: arguments with believed premisses are to be considered as strictly stronger by the agent to arguments whose premisses are not believed. An awareness epistemic logic that captures qualified versions of both principles was recently proposed in the literature. This paper extends that logic in three different directions. First, we try to improve its conceptual grounds, by depicting its philosophical foundations, critically discussing some of its design choices and exploring further possibilities. Second, we provide a (heretofore missing) completeness theorem for the basic fragment of the logic. Third, we study, using techniques from dynamic epistemic logic, how different forms of information change can be captured in the framework.
\end{abstract}
\sloppy

\section{Introduction}
Belief and argumentation are two central dimensions of humans' cognitive architecture. They have received attention from antiquity to nowadays, and from a broad range of disciplines. It is then unsurprising that formal researchers have undertaken the task of modelling both phenomena. Regarding beliefs, there is an important amount of options for capturing some of its formal aspects \cite{sep-formal-belief}. These models usually capture \emph{what} kind of things are believed (typically, propositions or sentences); \emph{who} believes them (intelligent agents); and, only sometimes, \emph{how} strong or safe these beliefs are (for instance, in probabilistic models of belief or in plausibility structures \cite{baltag2008}). However, most of them fail to capture \emph{why} agents do believe certain things. This lack motivates the recent trial within the epistemic logic community of capturing the missing justification component. This enterprise has been approached from a variety of methods: justification logic \cite{artemov2005,artemov2012,artemov2018awareness,artemov-fitting-sep}, evidence logics based on neighbourhood semantics \cite{van2012evidence,van2014evidence}, and its further topological development  \cite{baltag2016justified}, amongst others. Yet another natural candidate to model justification consists in using conceptual and technical tools coming from argumentation theory (as done, e.g. in \cite{grossi2014,shi2017argument,liandxu2020,comma}). \par
As to argumentation theory, it is a well-established, interdisciplinary field of research  \cite{hbargthe}. Since the last few decades, formal argumentation has gained more and more attention within the field of artificial intelligence, and its general advantages have been highlighted several times \cite{bench2007argumentation,addedvalue}. Within formal approaches to argumentation, it is frequent to distinguish between \emph{abstract approaches} (those that consider arguments as primitive, atomic entities) and \emph{structured approaches} (those that explicitly account for the structure of arguments). For expository purposes, we just mention the popular Dung's approach to abstract argumentation \cite{dung1995acceptability}, based on so-called \emph{abstract argumentation frameworks}, and the ASPIC family of formalisms for structured argumentation, e.g., ASPIC$^{+}$ \cite{prakken2013,aspichbchapter}, that will be the main argumentative resources used in this paper.
\par 
Recently, some works have taken the first steps to explore and exploit the relations among the two different traditions (epistemic logic and formal argumentation). These can be divided in two groups. On the one hand, there are works using epistemic logic tools to reason about argumentation frameworks \cite{schwarzentruber2012building,sakama2020epistemic,proietti2021}. On the other hand, there are works using argumentation tools to provide an (argumentatively inspired) notion of justified belief (the already mentioned \cite{grossi2014,shi2017argument,liandxu2020,comma}). The current paper is inserted in the latter group, and it follows the ideas of \cite{comma} that, contrarily to \cite{grossi2014,shi2017argument,liandxu2020}, and according to more standard ideas in structured argumentation, decides to model arguments as \emph{syntactic entities}.
\par
We start by pointing out that the informal relation between argumentation and belief is itself problematic. Arguably, there is a tension between two intuitive principles governing belief formation and argument evaluation. These principles are:
\begin{description}
\item[$\po$\label{po}] \emph{Beliefs are an input for argument evaluation}, meaning that arguments with believed premisses are better to those with contingent or even rejected premisses.\footnote{We use the term \emph{contingent} in its doxastic sense, that is, a sentence is said to be contingent iff it is neither believed nor believed to be false.}
\item[$\pt$\label{pt}] \emph{Argumentation is an input for belief formation}, meaning that rational agents should believe sentences that are ground in good arguments.
\end{description}

The mentioned tension arises when one tries to embrace both principles without any restriction, leading to an infinite regress. A very similar problem can be found in the root of a long-standing debate about the structure of epistemic justification within contemporary epistemology. \emph{Foundationalist} solutions to such a tension, to which we adhere here, consists in distinguishing between \emph{basic (non-inferred) beliefs} and \emph{non-basic (inferred) beliefs}, where the latter inherit the justification from the former \cite{sep-justep-foundational}. This implies accepting qualified version of both principles, but giving some sort of priority to \nameref{po} over \nameref{pt}. Curiously enough, an analogous distinction can be found as one of the basis of the recent argumentative theory of reason advocated by Mercier and Sperber \cite{sperber2011}. In this context, basic beliefs are called \emph{intuitive beliefs} while inferred beliefs are called \emph{reflective beliefs} (see \cite{sperber1997intuitive}  for a detailed exposition of the distinction). \par
In the rest of this paper, we follow up the work made in \cite{comma}, by extending it in three different directions. First, and after recalling the logic introduced there, whose language allows talking about basic beliefs and structured arguments, we provide its sound and complete axiomatisation (Section \ref{seclog}). We then explain how to use this logic for reasoning about \emph{explicit basic beliefs} and \emph{argument-based belief}, discussing some of the design choices, as well as depicting some alternatives (Section \ref{sec:bels}). Finally, we extend the basic fragment of the logic so as to capture different kinds of informational dynamics, illustrating their effects on both types of beliefs (Section \ref{sec:dyn}).
\section{An awareness logic for belief and argumentation}\label{seclog}
Let us start by recalling the logic introduced in \cite{comma}. We follow the traditional order for presentation: syntax, semantics, and proof theory. We assume a countable set of \emph{propositional letters} $\oms$ as fixed from now on. The \textbf{language} $\lanab$ is defined as the the pair $(\forset,\argset)$ of \emph{formulas} and \emph{arguments} which are respectively generated by the following grammars: 
\begin{center}
 $\varphi::= p \mid \lnot \varphi \mid (\varphi \land \varphi) \mid \square \varphi \mid \ow (\alpha) \mid \conc(\alpha)=\varphi \mid$\\[2mm] $  \mid \strict (\alpha) \mid \undercut(\alpha,\alpha) \mid \well( \alpha) \qquad  p \in \oms, \alpha \in \argset\text{.}$\\[3mm]
 $\alpha::= \langle \varphi \rangle \mid \langle \alpha_1,...,\alpha_n \sto \varphi \rangle \mid \langle \alpha_1,...,\alpha_n \Rightarrow \varphi \rangle  \qquad \varphi \in \forset,  n\geq 1\text{.}$
\end{center}

\par 
The rest of Boolean operators ($\to, \lor, \leftrightarrow$) and constants ($\top,\perp$), as well as the dual of $\square$ (noted $\lozenge$), are defined as usual. Arguments of $\lanab$ have the following informal readings. $\ang{\varphi}$ is an \emph{atomic argument}. Note that this kind of arguments are rather strange in real-life examples, since they have one sole premise and conclusion, and there is not a proper inference step. Mathematically, they can be understood as a one-line proof from $\varphi$ to $\varphi$. As for $\strictalpha$ (resp. $\defalpha$), it represents an argument claiming that $\varphi$ follows deductively (resp. defeasibly) from the conclusions of arguments $\alpha_1,...,\alpha_n$. As an example of a complex argument consider $\ang{\ang{\ang{\mathsf{Bird}},\ang{\mathsf{Bird}\to \mathsf{Wings}} \sto \mathsf{Wings}} \Rightarrow \mathsf{Flies}}$ that informally reads ``This has wings, because it is a bird and all birds have wings. Moreover, since it has wings, it presumably (defeasibly) flies''. \par
Regarding formulas, elements of $\oms$ represent factual, atomic propositions. $\square \varphi$ means that the agent implicitly (ideally) believes that $\varphi$. $\ow(\alpha)$ reads ``the agent is aware of $\alpha$''. $\conc(\alpha)=\varphi$ reads the ``conclusion of $\alpha$ is $\varphi$''. $\strict(\alpha)$ means that $\alpha$ does not contain defeasible inference steps. $\undercut(\alpha,\beta)$ means that $\alpha$ undercuts $\beta$, that is, $\alpha$ attacks some defeasible inference link of $\beta$. Finally, $\well(\alpha)$ means that $\alpha$ has been constructed properly, that is, all its deductive inference steps are valid and all its defeasible inference steps are accepted by the agent.
\par
We use $\seq$ to denote the \emph{set of all finite sequences over $\forset$}. We denote an arbitrary sequence of $n$+1 elements over $\forset$ as $((\varphi_1,...,\varphi_n),\varphi)$. Sequences of formulas are useful to represent inference steps in the meta-language. Although strongly connected from a conceptual point of view, the sequence $((\varphi_1,...,\varphi_n),\varphi)$ is \emph{not} the same object as, for instance, the object language argument $\ang{\ang{\varphi_1},...,\ang{\varphi_n} \Rightarrow \varphi}$. Let $\rul=((\varphi_1,...,\varphi_n),\varphi)\in \seq$ we use $\alpha^{R}$ as a shorthand for $\ang{\ang{\varphi_1},...,\ang{\varphi_n}\Rightarrow \varphi}$. We can see $\alpha^{\rul}$ as the \emph{simplest argument using $\rul$}. As an example, consider the rule $\rul_1= ((\mathsf{Wings}), \mathsf{Flies})$, we have $\alpha^{\rul_1}=\ang{\ang{\mathsf{Wings}}\Rightarrow \mathsf{Flies}}$, but note that the are infinitely many other arguments using $\rul_1$, for instance $\ang{\ang{\ang{\mathsf{Bird}},\ang{\mathsf{Bird}\to \mathsf{Wings}} \sto \mathsf{Wings}} \Rightarrow \mathsf{Flies}}$.
\black \par 
Let us define the following meta-syntactic functions for analysing an \textbf{argument's structure}, taken form ASPIC$^{+}$ \cite{prakken2013}:

$\prem(\alpha)$ returns the \textbf{premisses} of $\alpha$ and it is defined as follows: $\prem(\langle \varphi \rangle):=\{\varphi\}$,  \quad $\prem(\la \alpha_1,...,\alpha_n \hookrightarrow \varphi \ra):= \prem(\alpha_1)\cup ... \cup \prem (\alpha_n)$ where $\hookrightarrow \in \{\sto, \Rightarrow\}$.  
\par
$\sconc(\alpha)$ returns the \textbf{conclusion} of $\alpha$ and it is defined as follows $\sconc(\la \varphi \ra):=\{\varphi\}$ and $\sconc(\la \alpha_1,...,\alpha_n \hookrightarrow \varphi \ra):=\{\varphi\}$ where $\hookrightarrow \in \{\sto, \Rightarrow\}$.
\par
 $\sub(\alpha)$ returns the \textbf{subarguments} of $\alpha$ and it is defined as follows: $\sub(\la \varphi \ra):=\{\la \varphi \ra \}$ and $\sub(\la \alpha_1,...,\alpha_n \hookrightarrow \varphi \ra):=\{\la \alpha_1,...,\alpha_n \hookrightarrow \varphi \ra\}\cup \sub(\alpha_1)\cup...\cup \sub (\alpha_n)$ where $\hookrightarrow \in \{\sto, \Rightarrow\}$. \par 
 $\tr(\alpha)$ returns the \textbf{top rule} of $\alpha$, i.e. the last rule applied in the formation of $\alpha$. It is defined as follows: $\tr(\langle \varphi \rangle)$ is left undefined, $\tr(\la \alpha_1,...,\alpha_n \sto \varphi \ra)=\tr(\la \alpha_1,...,\alpha_n \Rightarrow \varphi \ra):= $ $((\sconc(\alpha_1),...,\sconc(\alpha_n)), \varphi)$. \par 
$\defrule(\alpha)$ returns the set of \textbf{defeasible rules} of $\alpha$ and it is defined as $\defrule(\langle \varphi \ra):=\emptyset$, $\defrule(\la \alpha_1,...,\alpha_n \sto \varphi \ra):=\defrule(\alpha_1)\cup...\cup \defrule(\alpha_n)$ and $\defrule (\la \alpha_1,...,\alpha_n \Rightarrow \varphi \ra):= \{((\sconc(\alpha_1),...,\sconc (\alpha_n)), \varphi)\}\cup \defrule(\alpha_1) \cup... \cup \defrule(\alpha_n)$. 
\par \medskip

 Let us also define \textbf{semantic propositional negations}, for any $\varphi,\psi \in \forset$: $\varphi=\sim \psi$ abbreviates $\well(\ang{\ang{\varphi}\sto \lnot \psi}) \land \well(\ang{\ang{\psi}\sto \lnot \varphi})$. 

\par \smallskip
Let us now move to semantics. A \textbf{model} for $\lanab$ is a tuple $M=\model$ where:
\begin{itemize}
\item $W\neq \emptyset$ is a set of \emph{possible worlds}.
\item $\belset \subseteq W $ and $\belset \neq \emptyset$ is the set of \emph{worlds that are doxastically indistinguishable for the agent}.

\item $\sow \subseteq \argset$ is the set of \emph{available arguments}, also called the \emph{awareness set of the agent}.
\item $\rulset \subseteq \seq$ is a set of \emph{accepted defeasible rules}. Moreover, for every $((\varphi_1,...,\varphi_n),\varphi)\in \rulset$ we require that:
\begin{itemize}

 \item  $\{\varphi_1,...,\varphi_n,\varphi\}\nvdash_0 \perp$ (defeasible rules are consistent), where $\vdash_{0}$ denotes the consequence relation of classical propositional logic, and 
 
 \item $\{\varphi_1,...,\varphi_n\}\nvdash_0 \varphi$ (defeasible rules are not deductively valid).
\end{itemize}

 \item $\nf:  \seq \black \to \oms$ is a (possibly partial) \emph{naming function} for rules, where $\nf(R)$ informally means ``the rule $R$ is applicable''.
 \item $||\cdot||$ is and an \emph{atomic valuation}, i.e. a function $||\cdot||:\oms \to \wp(W)$.

  \end{itemize}

\noindent \textbf{Interpretation.} In a given model $M=\model$, $\sow$ represents the set of arguments that the agent entertains or is aware of. Whenever $\alpha \in \sow$, we assume that (i) the agent can determine her doxastic attitude toward the premisses of $\alpha$ through non-inferential methods (for instance, through observations), and (ii) she knows the structure of $\alpha$ (either because $\alpha$ has been communicated to her, or because she has gone through the cognitive process of building $\alpha$). Besides this, there is not semantic intuition underlying $\sow$, so the agent can be perfectly aware of rather silly arguments, as $\ang{\ang{p}\sto q}$, without accepting them in any sense. Moreover, rules in the set $\rulset$ are interpreted as rules whose inference strength lies in their content, rather than as purely formal schemas (as deductive rules are). As an example, consider the rule ``Peter's bike is on the bike parking area, therefore he should be in his office''. The term \emph{accepted} means that the agent considers them applicable if there are not good reasons against doing so. Note that $\alpha^{\rul} \in \sow$ does not imply $\rul \in \rulset$ (informally corresponding to the intuition that an agent can be aware of a defeasible argument without accepting its rule). There are further restrictions that could be arguably adopted, but that we leave out for the sake of simplicity. For instance, we could require $\sow$ to be closed under subarguments, or that for any accepted defeasible rule, the agent is aware of at least an argument using it.
\par \black
 Let $M=\model$ be a model for $\lanab=(\forset,\argset)$. The \textbf{set of well-shaped arguments} $\welset^{M} \subseteq \argset$ (depending on $\rulset$ in $M$) is  the smallest set  fulfilling the following conditions: 
  \begin{enumerate}
  \item $\langle \varphi \rangle\in \welset^{M}$ for any $ \varphi\in \forset \black$.
 \item  $\la \alpha_1,...,\alpha_n \sto \varphi \ra\in \welset^{M}$ iff both $\alpha_i\in \welset^{M}$ for every $1\leq i \leq n$ and $\{\sconc(\alpha_1),...,\sconc(\alpha_n)\}\vdash_0 \varphi$.
\item $\la \alpha_1,...,\alpha_n \Rightarrow \varphi \ra\in \welset^{M}$ iff  both $\alpha_i\in \welset^{M}$ for every $1\leq i \leq n$ and  $((\sconc(\alpha_1),...,\sconc(\alpha_n)),\varphi) \in \rulset$. 
  \end{enumerate}
  We drop the superscript $M$ from $\welset^{M}$ whenever there is no danger of confusion.

\par \medskip
Let $(M,w)$ be a pointed model for $\lanab$, that is, $M=(W,\belset, \sow, \rulset,\nf, ||\cdot||)$ is a model and $w \in W$. The \textbf{truth} relation, relating pointed models and formulas, is given by:\footnote{Note that we do not need to consider $\undercut$ as a primitive operator, since it could be defined through a (simpler) operator that captures the meaning of $\nf$. We make this choice for the sake of succinctness, as well as for studying the axiomatic behaviour of $\undercut$.} 
\smallskip
\begin{center}

\begin{tabular}{r c l}
$M,w \models \square \varphi$ & iff & for all $w'\in W$: $w' \in \belset$ implies $M,w'\models \varphi$.\\ 
$M,w \models \ow(\alpha)$ & iff & $\alpha \in \sow$. \\
 $M,w\models \conc(\alpha)=\varphi$ & iff & $\sconc(\alpha)=\varphi$. \\
 $M,w\models \strict (\alpha)$ & iff & $\defrule(\alpha)=\emptyset$. \\
 $M,w \models \undercut(\alpha,\beta)$ & iff & $\beta=\ang{\beta_1,...,\beta_n \Rightarrow \psi}$ and $\sconc(\alpha)= \lnot \nf(\trule(\beta))$.\\
 $M,w \models \well(\alpha)$ & iff &  $\alpha\in \welset^{M}$.
\end{tabular}
\end{center}

\smallskip

A formula $\varphi $ is said to be \textbf{valid} (noted $\models \varphi$) iff it is true at all pointed models. We use $||\varphi||_{M}$ to denote the \textbf{truth-set} of $\varphi$, i.e., the set of worlds of $M$ where $\varphi$ is true, and $\mathcal{M}$ to denote the class of all models. \par  
We now present a sound and complete axiomatisation of $\lanab$ w.r.t. $\mathcal{M}$, a topic that was left out in \cite{comma}, and that constitutes one of the main technical contributions of the current paper. Although our models provide a compact representation of the needed components for reasoning about basic and argument-based beliefs in a single-agent context, they are rather non-standard from a technical point of view. Besides the strongly syntactic character of some of their elements, their modal components are not defined as usual, therefore the definition of its canonical model cannot be extrapolated straightforwardly. Nevertheless, we can provide an indirect completeness proof (see Appendix \hyperref[app:compstat]{A1} for details).
\begin{theorem}\label{compstatic} The axiom system $\mathsf{L}^{\mathsf{BA}}$, defined in Table \ref{tab:axioms}, is sound a complete for $ \lanab$ w.r.t. $\mathcal{M}$.
\end{theorem}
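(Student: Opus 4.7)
The plan is to establish soundness by direct inspection of the axioms in Table~\ref{tab:axioms}, and to prove completeness via an indirect Henkin-style construction followed by a model-surgery step.

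Soundness reduces to a clause-by-clause check. The modal base for $\square$ is KD45: in any $M \in \mathcal{M}$ the implicit accessibility for $\square$ is constantly $\belset \neq \emptyset$, hence serial, transitive, and Euclidean. The expected rigidity principles, such as $\ow(\alpha) \to \square\ow(\alpha)$ (and analogously for $\conc$, $\strict$, $\undercut$, $\well$), are valid because these operators depend only on the world-independent parameters $\sow$, $\rulset$, $\nf$. The remaining axioms should encode the meta-syntactic definitions of $\sconc$, $\trule$, $\defrule$, the three inductive clauses of $\welset^M$, the functional character of $\conc$ and of $\nf$, and the consistency and non-validity side-conditions on $\rulset$.

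For completeness, fix a consistent $\varphi_0$ and a maximal consistent set $\Gamma_0 \ni \varphi_0$, and build a canonical pseudo-model $M^c = (W^c, R^c, \sow^c, \rulset^c, \nf^c, ||\cdot||^c)$ with $W^c$ the set of all MCSs and $R^c$ the standard canonical relation $\Delta R^c \Delta'$ iff $\{\psi : \square\psi \in \Delta\} \subseteq \Delta'$. The rigid parameters are read off from $\Gamma_0$: set $\sow^c := \{\alpha : \ow(\alpha) \in \Gamma_0\}$, $\rulset^c := \{R \in \seq : \well(\alpha^R) \in \Gamma_0\}$, $\nf^c(R)$ as the (by functionality) unique atom witnessed by $\undercut$-formulas in $\Gamma_0$, and $||p||^c := \{\Delta : p \in \Delta\}$. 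A Truth Lemma $M^c, \Delta \models \psi \iff \psi \in \Delta$ then follows by induction on $\psi$, with the rigidity axioms transferring argument-theoretic facts across worlds. The final surgery converts $M^c$ into an $M^* \in \mathcal{M}$: restrict the domain to $\{\Gamma_0\} \cup R^c[\Gamma_0]$, set $\belset := R^c[\Gamma_0]$ (non-empty by the D-axiom), and keep the other parameters unchanged. The KD45 properties of $R^c$ give $R^c[w'] = R^c[\Gamma_0]$ for every $w' \in \belset$, so the restricted $\square$-semantics of $M^*$ agrees with the canonical truth on $\square$-formulas, yielding $M^*, \Gamma_0 \models \varphi_0$.

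The main obstacle is to ensure that the canonically extracted parameters fit all model conditions of Section~\ref{seclog}, and that the semantic reading of $\well$ in $M^c$ coincides with its syntactic reading $\{\alpha : \well(\alpha) \in \Gamma_0\}$. The consistency and non-derivability constraints on members of $\rulset^c$ must be forced by axioms forbidding $\well(\alpha^R)$ whenever $\{\varphi_1,\dots,\varphi_n,\varphi\} \vdash_0 \perp$ or $\{\varphi_1,\dots,\varphi_n\} \vdash_0 \varphi$. Well-definedness of $\nf^c$ needs both a functionality axiom and a generation axiom supplying the witnessing atom when $\undercut$ is to hold. The coincidence of syntactic and semantic $\welset$ is proved by induction on $\alpha$: the atomic case is immediate; the strict case follows from an axiom reducing $\well$ of a strict argument to well-formedness of its subarguments together with deductive validity of its final step; the defeasible case reduces to the definition of $\rulset^c$. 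This last induction, and in particular the interface between the object-language operator $\well$ and the meta-logical relation $\vdash_0$, is the most delicate technical point of the proof.
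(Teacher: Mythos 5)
Your overall strategy is the one the paper follows: an indirect, Henkin-style completeness argument in which a canonical Kripke-style structure is built from maximal consistent sets, the introspection axioms (Ax2)--(Ax7) are used to make the argumentative parameters rigid along the accessibility relation, and a final restriction/collapse step produces a genuine model of $\mathcal{M}$ with constant, non-empty $\belset$. The soundness check, the role of (Ax17)--(Ax18) in forcing the consistency and non-deductive-validity constraints on the canonical rule set, the use of the functionality axiom (Ax19) for $\nf^{c}$, and the nested induction on the structure of $\alpha$ for the $\well$ clause of the Truth Lemma all match the paper's proof sketch.

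There is, however, one step that fails as literally written. You fix the parameters $\sow^{c}$, $\rulset^{c}$, $\nf^{c}$ globally by reading them off the single set $\Gamma_0$, while letting $W^{c}$ be the set of \emph{all} maximal consistent sets, and you then assert the Truth Lemma $M^{c},\Delta\models\psi$ iff $\psi\in\Delta$ for every $\Delta$. That biconditional is false for, e.g., $\psi=\ow(\alpha)$ at any $\Delta$ that disagrees with $\Gamma_0$ about $\ow(\alpha)$ (such $\Delta$ exist, since nothing in the system decides $\ow(\alpha)$ outright), and likewise for $\well$ and $\undercut$: the introspection axioms only transfer these facts \emph{along} the canonical relation $R^{c}$, not between arbitrary maximal consistent sets. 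The repair is exactly the reordering the paper performs: either index the canonical parameters by worlds (writing $\sow^{c}(\Gamma)$, $\rulset^{c}(\Gamma)$, $\nf^{c}(\Gamma,\cdot)$), prove the Truth Lemma for that non-standard Kripke model, and only then pass to the submodel generated by $\Gamma_0$ --- which (Ax2)--(Ax7) guarantee is \emph{uniform}, so that it collapses to a standard model via the map $\tau$ --- or, equivalently, cut down to $\{\Gamma_0\}\cup R^{c}[\Gamma_0]$ \emph{before} proving the Truth Lemma, checking that this set is closed under $R^{c}$ so that the Existence Lemma for $\square$ still goes through. Your final ``surgery'' paragraph contains all the ingredients needed for this; it just has to happen before the Truth Lemma is invoked, not after.
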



\black
\begin{table}[h!]
\black
\centering
\sloppy
\begin{tabular}{l}
\hline
 \qquad \textbf{Modal core axioms} \\\hline
(Ax0) \quad All propositional tautologies  \\
(Ax1) \quad $KD45$ axioms for $\square$   \\
\qquad \textbf{Introspection axioms} \\\hline
(Ax2) \quad $ \ow(\alpha) \to \square \ow(\alpha)$ \\
 (Ax3) \quad  $ \lnot \ow(\alpha) \to \square \lnot \ow(\alpha)$ \\
(Ax4) \quad $ \well(\alpha) \to \square \well(\alpha)$ \\
(Ax5) \quad $ \lnot \well(\alpha) \to \square \lnot \well(\alpha)$\\
(Ax6) \quad  $ \undercut(\alpha,\beta)\to \square \undercut(\alpha,\beta)$ \\  
(Ax7) \quad  $ \lnot \undercut(\alpha,\beta)\to \square \lnot \undercut(\alpha,\beta)$ \\
\black
 \qquad \textbf{Axioms for syntactic operators} \\\hline
(Ax8) \quad $ \conc(\alpha)=\varphi$ \quad whenever $\sconc(\alpha)=\varphi$ \\
(Ax9) \quad $ \lnot \conc (\alpha)=\varphi$ \quad whenever $\sconc(\alpha)\neq \varphi$ \\
(Ax10) \quad$ \strict (\alpha)$ \quad whenever $\defrule(\alpha)=\emptyset$ \\
(Ax11) \quad $ \lnot \strict (\alpha)$ \quad whenever $\defrule(\alpha)\neq\emptyset$ \\
\qquad \textbf{Wellshapedness axioms} \\
\hline 
(Ax12) \quad $ \well(\la \varphi \ra)$   \\
(Ax13) \quad$ \well(\la \alpha_1,...,\alpha_n \sto \varphi\ra)\to \bigwedge_{1 \leq i \leq n} \well(\alpha_i)$   \\

(Ax14) \quad $  \bigwedge_{1 \leq i \leq n} \well(\alpha_i) \to \well(\la \alpha_1,...,\alpha_n \sto \varphi\ra)$  \\ \hspace*{\fill} whenever $\{\sconc(\alpha_i)\mid 1 \leq i \leq n\} \vdash_{0} \varphi$ \\
(Ax15) \quad $  \lnot \well(\la \alpha_1,...,\alpha_n \sto \varphi\ra)$ \quad  whenever $\{\sconc(\alpha_i)\mid 1 \leq i \leq n\}\nvdash_{0} \varphi$ \\

 (Ax16) \quad $ \Big(\bigwedge_{1 \leq i \leq n} \well(\alpha_i) \land \well(\la \la \sconc(\alpha_1)\ra,...,\la \sconc(\alpha_n)\ra \Rightarrow \varphi \ra) \Big)$  \\ \hspace*{\fill} $\leftrightarrow  \well(\la \alpha_1,...,\alpha_n \Rightarrow \varphi \ra) $ \\


(Ax17) \quad  $(\well(\la \alpha_1,...,\alpha_n \Rightarrow \varphi \ra) \to \lnot \well(\la \la \sconc(\alpha_1)\ra ,...,\la \sconc(\alpha_n)\ra, \la \varphi \ra \sto \perp \ra)$\\

(Ax18) \quad $(\well(\la \alpha_1,...,\alpha_n\Rightarrow \varphi \ra) \to \lnot \well(\la \alpha_1,...,\alpha_n   \sto \varphi\ra)$
\\
\qquad \textbf{Undercut axioms} \\
\hline 
 (Ax19) \quad $ \undercut(\la \lnot p \ra, \alpha^{R})\to \lnot \undercut(\la \lnot q \ra, \alpha^{R})$ \quad whenever $q\neq p$

\\
(Ax20) \quad $ \lnot \undercut(\alpha, \la \varphi \ra)$ \\
(Ax21) \quad $ \lnot \undercut(\alpha,\la \alpha_1,...,\alpha_n \sto \varphi \ra)$ \\


(Ax22) \quad $ \lnot \undercut(\alpha,\beta)$ \quad whenever $\sconc(\alpha)\neq \lnot p$ for some $p \in \oms$ \\
\black

(Ax23)\quad $ (\undercut(\la \lnot p \ra, \la \la \sconc(\beta_1)\ra ,...,\la \sconc(\beta_n)\ra \Rightarrow \varphi \ra) \land \conc(\alpha)=\lnot p)\to $ \\
\hspace*{\fill} $\undercut(\alpha, \la \beta_1,...,\beta_n \Rightarrow \varphi \ra)$\\
 
(Ax24) \quad $  (\undercut(\alpha, \la \beta_1,...,\beta_n \Rightarrow \varphi \ra) \land \conc(\alpha)=\lnot p)\to $ \\
\hspace*{\fill} $\undercut(\la \lnot p \ra, \la \la \sconc(\beta_1)\ra ,...,\la \sconc(\beta_n)\ra \Rightarrow \varphi \ra)$ \\

\qquad \textbf{Rules} \\ \hline 
(MP) \quad From $ \varphi \to \psi$ and $\varphi$ infer $\psi$  \\
(Nec) \quad From $\varphi$, infer $\square \varphi$ \\
  \\
\hline

\end{tabular}
\caption{Axiom system.}
\label{tab:axioms}
\end{table}

\section{Basic beliefs and argument-based beliefs}\label{sec:bels}
The logic introduced above can be used to study a rich repertoire of doxastic attitudes. We start by discussing \emph{basic beliefs}, informally representing those that are not grounded on inferential processes. As mentioned, they can also be understood in terms of \emph{intuitive beliefs}, i.e., those that the agent extracts from a sort of data-base, seen by her as completely trustworthy \cite{sperber1997intuitive}. As usual in awareness epistemic logic, we have two versions of this notion. On the one hand, we have the implicit (ideal) version of basic beliefs, modelled through $\square \varphi$, that suffers from the extensively discussed problem of logical omniscience (see e.g. \cite[Chapter 9]{FH2004}). On the other hand, we have its explicit counterpart, for which we have chosen $\square^{e}\varphi:= \square \varphi \land \ow(\ang{\varphi})$. Note that, like in other logics for implicit and explicit belief, it holds that $\models \square^{e}\varphi \to \square \varphi$. Moreover, under the current semantics, $\square^{e} \varphi$ is equivalent to a schema that resembles another usual option for modelling explicit beliefs (e.g. \cite{velazquez2014dynamic}): $\models \square^{e}\varphi \leftrightarrow \square (\varphi \land \ow(\ang{\varphi}))$.
\par 
Besides basic beliefs, we can also capture in $\lanab$ a sort of deductive-explicit belief. Deductive-explicit beliefs are those rooted in a deductive argument s.t. the agent has a basic belief that all its premisses are true. Formally, and following \cite{baltag2012}, we define \textbf{doxastic argument acceptance} as
\begin{center}
$\accept(\alpha):=\bigwedge_{\varphi \in \prem(\alpha)} \square \varphi$,
\end{center}

and \textbf{deductive-explicit belief} as
\begin{center}
$\mathsf{B}^{\mathsf{D}}(\alpha,\varphi):= \accept(\alpha) \land \ow (\alpha) \land \conc(\alpha)=\varphi \land \strict(\alpha)\land \well(\alpha)$.
\end{center} 

\par \smallskip

Note that $\models \mathsf{B}^{\mathsf{D}}(\alpha,\varphi) \to \square \varphi$ and $\models \square^{e}\varphi  \leftrightarrow \mathsf{B}^{\mathsf{D}}(\ang{\varphi},\varphi) $. The first validity shows that deductive-explicit beliefs are a subset of basic-implicit beliefs. The second one shows that basic-explicit beliefs are an extreme case of deductive-explicit beliefs (those that are rooted in the trivial deduction that goes from $\varphi$ to $\varphi$, i.e., in the atomic argument $\ang{\varphi}$). \par 
Up to now, we have not gone far from the kind of attitudes that are usually discussed in the awareness logic literature (e.g. in \cite{FH1987,van2010dynamics,grossi2009twelve,grossi2015syntactic}). We now take a small detour through argumentation theory in order to define argument-based beliefs. Roughly speaking, argument-based beliefs are grounded in arguments that may involve non-deductive steps. They can be understood, at least to some extent, in terms of the \emph{reflective beliefs} of \cite{sperber1997intuitive}. Recall that we are after formalising the principle \nameref{pt} presented in the introduction: the beliefs of a rational agent should be grounded in good arguments. But, what does it means \emph{good} in this context? Following \cite{beirlaen2018argument}, the very notion of argument strength can be analysed in three different layers or dimensions: the \emph{support dimension} (how strong is the reason given by an argument to accept its conclusion), the \emph{dialectic dimension} (how arguments attack and defeat each other), and the \emph{evaluative dimension} (how the former conflicts are to be solved). \par
 Hence, the first step is to set up a notion of argument strength regarding the \emph{support dimension}. Formally, we seek to define a preference relation among the arguments of $\lanab$ that takes into account \nameref{po} (arguments with believed premisses are to be preferred over arguments with premisses that are not believed). In \cite{comma}, we showed how to do this by splitting all arguments in three preference classes that were based on the basic doxastic attitude of the agent toward the premisses of the arguments. Here, we take a much simpler view, for the sake of brevity, and directly exclude arguments whose premisses are not believed. Both options makes \nameref{pt} dependent on \nameref{po}, since in the process of grounding arguments in beliefs and these in turn in new arguments, we arrive at good arguments that are good just because the agent has a basic belief that all their premisses hold. However, inference links must still play a role when determining the relative strength of two arguments, the simplest principle that can be adopted in this regard is captured by the following binary relation among arguments of $\lanab$:
$\alpha \geq \beta:= \strict (\alpha) \lor \lnot \strict(\beta)$. This relation informally corresponds to the idea that, \emph{ceteris paribus}, deductive arguments are to be preferred to non-deductive ones. \par
Regarding the \emph{dialectic dimension} of argument strength, we capture two forms of argumentative defeat, namely, \emph{undercutting} (attacking a defeasible inference step of any subargument) and \emph{successful rebuttal} (attacking the conclusion of a less or equally preferred subargument). 
 Formally,

\begin{itemize}

\item \textbf{Undercutting a subargument}  $\undercut^{\ast}(\alpha,\beta):=\bigvee_{\beta' \in \sub (\beta)}\undercut(\alpha,\beta')$.
\item \textbf{Unrestricted successful rebuttal} \\  $\mathsf{U}\rebut(\alpha,\beta):= \lnot \strict (\beta) \land  \bigvee_{\beta' \in \sub (\beta)} (\conc(\alpha)=\varphi \land \conc(\beta')=\psi \land \varphi=\sim \psi \land \alpha \geq \beta')$.
\item \textbf{Defeat} $\defeat (\alpha,\beta):= \undercut^{\ast}(\alpha,\beta)\lor\mathsf{U}\rebut(\alpha,\beta) $.

\end{itemize}

 As discussed in the formal argumentation literature, there is a more restrictive alternative for the notion of rebuttal, requiring the top rule of the attacked subargument to be defeasible\footnote{See \cite{yu2018structured} for a discussion about the two possible design choices. Note moreover that the other customary type of attack, i.e. \emph{undermining} (attacking a premise), makes sense only when non-believed premisses are taken into account.}
\begin{itemize}
\item \textbf{Restricted successful rebuttal} \\
$\mathsf{R}\rebut(\alpha,\beta):= \lnot \strict (\beta) \land \bigvee_{\ang{\beta_1,...,\beta_n \Rightarrow \varphi} \in \sub (\beta)} (\conc(\alpha)= \psi \land \varphi=\sim \psi)$.
\end{itemize}
\par 
Argumentation frameworks and their semantics \cite{dung1995acceptability} are the most studied tool to capture the \emph{evaluative dimension} of argument strength. We now explain how to incorporate them in the current approach. Let $(M,w)$ be a pointed model for $\lanab=(\forset,\argset)$, we define its \textbf{associated argumentation framework} as $AF^{M}:=(\argset^{M},\sdef)$, where $\argset^{M}:=\{\alpha \in \argset \mid M,w\models \ow(\alpha)\land \well(\alpha) \land \accept (\alpha)\}$ and $\sdef \subseteq \argset^{M}\times \argset^{M}$ is given by $\alpha \sdef \beta$ iff $M,w\models \defeat (\alpha,\beta)$. We stress the fact that in the domain of our frameworks (i.e., in $\argset^{M}$), basic beliefs act as a filter (in the clause $\accept(\alpha)$), instantiating a qualified, unproblematic version of \nameref{po}, namely $\po^{\prime}$: \emph{basic beliefs} are an input for argument evaluation. Given a set of possibly conflicting arguments (an argumentation framework), we need a mechanism for the agent to decide which of the arguments are to be selected (an argumentation semantics). We say that a set $B \subseteq \argset^{M}$ is \textbf{conflict-free} iff there are no $\alpha,\beta \in B$ s.t. $\alpha \sdef \beta$. Moreover, we say that $B\subseteq \argset^{M}$ \textbf{defends} $\alpha \in \argset^{M}$ iff for every $\gamma \in \argset^{M}$, $\gamma \sdef \alpha$ implies that there is $\beta \in B$ s.t. $\beta \sdef \gamma$. We say that $B \subseteq \argset^{M}$ is a \textbf{complete extension} iff it is conflict-free and it contains precisely the elements of $\argset^{M}$ that it defends. We say that $B \subseteq	 \argset^{M}$ is the \textbf{grounded extension} of $AF^{M}=(\argset^{M},\sdef)$ iff it is the smallest (w.r.t. set inclusion) complete extension. We use $\ground$ to denote the grounded extension of $AF^{M}$. As it is well-known, the grounded extension of an argumentation framework always exists and it is moreover unique \cite{dung1995acceptability}. The unfamiliar reader is referred to \cite{baroni2018abstract} for an extensive discussion on argumentation semantics. \par 
Finally, we use the grounded extension to define the \textbf{argument-based beliefs} of the agent. First, let us extend $\lanab=(\forset,\argset)$ to $\lanab^{\mathsf{AB}}=(\forset^{\mathsf{AB}},\argset)$ by adding a new kind of formulas $\mathsf{B}(\alpha,\varphi)$ where $\alpha\in \argset$ and $\varphi \in \forset$. $\mathsf{B}(\alpha,\varphi)$ means that the agent believes that $\varphi$ based on argument $\alpha$. We interpret the new language in the same class of models, by adding the truth clause:
\begin{center}

\begin{tabular}{r c l}

$M,w\models \mathsf{B}(\alpha,\varphi)$ & $\text{iff}$ & $ \alpha \in \ground \quad \text{and} \quad \sconc(\alpha)=\varphi\text{.}$
\end{tabular}
\end{center}
Note that $\models \argb^{\mathsf{D}}(\alpha,\varphi)\to \argb (\alpha,\varphi)$ and $\models \square^{e}\varphi \to \argb (\ang{\varphi},\varphi)$.
 \par\smallskip We close this section by analysing our notion of argument-based belief under the view of \cite{caminada2007evaluation}'s \emph{rationality postulates}. In a nutshell, if no restrictions are imposed, our agent behaves according to a kind of \emph{minimal rationality} (i.e. she does not explicitly believe in inconsistencies). If, however, we add some ideal assumptions, then she satisfies all \cite{caminada2007evaluation}'s postulates.

\begin{prop}\label{prop:rat} Let $(M,w)$ be a pointed model for $\lanab=(\forset,\argset)$, where $M=\model$. Let $AF^{M}$ be its associated argumentation framework, then:
\begin{itemize}
\item $AF^{M}$ satisfies \emph{direct consistency}, that is, there are no $\alpha,\beta \in \argset$ and $\varphi,\psi \in \forset$ s.t. $M,w\models \mathsf{B}(\alpha,\varphi)\land \mathsf{B}(\beta,\psi)\land \varphi=\sim \psi$.

\item If restricted rebuttal is assumed and $\sow=\argset$, then $AF^{M}$ satisfies \emph{direct consistency}; \emph{indirect consistency} (that is, $\sconc(\ground)\nvdash_{0}\perp)$
); \emph{sub-argument closure} (that is, $\alpha \in \ground$ implies $\sub(\alpha)\subseteq \ground$); and \emph{strict closure} (that is, $\sconc(\ground)\vdash_{0}\varphi$ implies $\varphi \in \sconc(\ground)$).

\end{itemize}

 \end{prop}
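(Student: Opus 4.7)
The plan is to exploit conflict-freeness of the grounded extension together with the classical closure properties of well-shaped strict arguments and the $KD45$ consistency of $\square$.

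For the first bullet (direct consistency, unconditional), I would argue by contradiction. Suppose $M,w\models \argb(\alpha,\varphi)\land \argb(\beta,\psi)\land \varphi=\sim\psi$. Then $\alpha,\beta\in \ground$, and the grounded extension is conflict-free. By unfolding $\varphi=\sim\psi$ one gets $\{\varphi,\psi\}\vdash_0\perp$. I would split into cases on strictness: \textbf{(i)} if $\lnot\strict(\beta)$, then taking $\beta'=\beta\in \sub(\beta)$ we have $\alpha\geq\beta$ (since $\lnot\strict(\beta)$), so $\mathsf{U}\rebut(\alpha,\beta)$ and hence $\alpha\sdef\beta$, contradicting conflict-freeness; symmetrically if $\lnot\strict(\alpha)$. \textbf{(ii)} if both $\alpha,\beta$ are strict, then a routine induction on well-shaped strict arguments gives $\prem(\alpha)\vdash_0 \varphi$ and $\prem(\beta)\vdash_0 \psi$, so $\prem(\alpha)\cup\prem(\beta)\vdash_0\perp$; but $\accept(\alpha)\land\accept(\beta)$ forces every such premise to hold in every $\belset$-world, and since $\belset\neq \emptyset$ (the $D$-axiom), we get a classically unsatisfiable conjunction of formulas true at some world, a contradiction.

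For the second bullet, assume $\sow=\argset$ and restricted rebuttal. Direct consistency is argued exactly as above, with the mild adjustment that in case (i) one uses the guaranteed existence of a defeasible top-rule subargument (witnessed by $\alpha$'s or $\beta$'s non-strictness combined with the paragraph-2 strict case). \emph{Sub-argument closure}: if $\alpha\in \ground$ and $\beta\in \sub(\alpha)$, then $\beta\in \argset^{M}$, since $\welset^{M}$ is closed under sub-arguments by clauses 2--3 of its definition, $\prem(\beta)\subseteq \prem(\alpha)$ yields $\accept(\beta)$, and $\ow(\beta)$ is free because $\sow=\argset$. Any attacker $\gamma$ of $\beta$ attacks $\alpha$ as well, because both $\undercut^{\ast}$ and $\mathsf{R}\rebut$ are defined via attacks on some subargument and $\sub(\beta)\subseteq \sub(\alpha)$; hence a defender of $\alpha$ in $\ground$ is a defender of $\beta$, so $\beta$ lies in the complete extension $\ground$.

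For \emph{indirect consistency} and \emph{strict closure} I would mirror the standard Caminada--Amgoud argument, observing that our strict inferences are exactly classical propositional consequence, so closure under transposition is automatic. Concretely, if $\{\varphi_1,\dots,\varphi_k\}\subseteq \sconc(\ground)$ with witnesses $\alpha_1,\dots,\alpha_k\in \ground$ satisfies $\{\varphi_1,\dots,\varphi_k\}\vdash_0\chi$ (with $\chi=\perp$ for indirect consistency, or $\chi=\varphi$ for strict closure), then for any index $i$ with $\lnot\strict(\alpha_i)$ one can form $\gamma_i=\ang{\alpha_1,\dots,\widehat{\alpha_i},\dots,\alpha_k,\ang{\lnot\chi}\sto \lnot\varphi_i}$, which is well-shaped by transposition, strict, has accepted premises (inherited from the $\alpha_j$'s, plus $\lnot\chi$ in the closure case which is a tautology), and lies in $\argset^{M}$ by $\sow=\argset$. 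The restricted-rebuttal of $\gamma_i$ against the non-strict subargument at the defeasible top-rule of $\alpha_i$ then breaks defence of $\alpha_i$ unless $\gamma_i$ is itself counter-attacked, forcing an infinite descent incompatible with $\ground$ being a complete extension. The strict-closure case reduces to indirect consistency by adding $\ang{\lnot\chi}$: if $\chi\notin \sconc(\ground)$, one derives $\perp$ from $\sconc(\ground)\cup\{\lnot\chi\}$ and repeats the construction. The main obstacle I expect is verifying that the $\alpha_j$ used to build $\gamma_i$ can always be chosen so that at least one has a defeasible top rule; this is exactly where the $D$-axiom/case (ii) of Paragraph 1 is reused to rule out the all-strict sub-case.
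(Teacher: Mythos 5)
Your first bullet is correct and in fact tightens the paper's own sketch. The paper also argues by contradiction and conflict-freeness, but it splits into nine cases according to the shape of the last inference step of $\alpha$ and $\beta$ and claims each surviving case yields $\alpha\sdef\beta$ or $\beta\sdef\alpha$; when both arguments are strict, however, no defeat is available under either rebuttal notion, and that subcase has to be discharged exactly as you do --- via $\prem(\alpha)\cup\prem(\beta)\vdash_{0}\perp$, $\accept$, and seriality of $\belset$. Your two-case split on $\strict$ is cleaner and covers that hole explicitly.

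For the second bullet the routes genuinely diverge: the paper observes that with restricted rebuttal and $\sow=\argset$ the construction is an instance of a well-defined ASPIC$^{+}$ framework over a knowledge base with no ordinary premisses, and imports the Caminada--Amgoud postulates wholesale from the literature; you re-prove them from scratch. Your sub-argument-closure argument is complete and correct (attackers of a subargument attack the superargument, so defence transfers downward). The transposition construction for indirect consistency and strict closure is the right idea, but as written it has a gap: in the strict-closure case you add $\ang{\lnot\chi}$ as a premise of $\gamma_i$ and claim its acceptance is free because ``$\lnot\chi$ is a tautology'' --- that holds only when $\chi=\perp$; for general $\chi=\varphi$ with $\sconc(\ground)\vdash_{0}\varphi$ there is no guarantee that $\square\lnot\varphi$ holds, so $\gamma_i$ need not lie in $\argset^{M}$ and the reduction stalls. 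The standard repair is to prove strict closure directly: form $\ang{\alpha_1,\dots,\alpha_k\sto\varphi}$, note it is well-shaped, strict, accepted and (by $\sow=\argset$) aware, and that under restricted rebuttal every attacker of it must attack some $\alpha_j$, so it is defended by $\ground$ and hence belongs to it by completeness; indirect consistency then follows from strict closure plus direct consistency, and the vague ``infinite descent'' can be dropped in favour of this defence-transfer argument. The paper's citation-based proof buys brevity; your direct proof, once the strict-closure step is repaired, buys self-containedness and makes visible exactly where $\sow=\argset$ and restricted rebuttal are used.
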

\begin{proof}[Proof (sketched)] For the first item, we suppose the contrary, that is, that there are arguments $\alpha,\beta \in \ground$ s.t. $\sconc(\alpha)=\varphi$, $\sconc(\beta)=\psi$ and $\varphi$ is propositionally equivalent to the negation of $\psi$. Then, we continue by cases on the shape of $\alpha$ and $\beta$ (each of them can be either an atomic argument, or an argument whose last inference step is deductive (resp. defeasible)). From the nine different cases, three of them are redundant. From the six remaining cases, it is easy to arrive to $\alpha \sdef \beta$ or $\beta \sdef \alpha$ (which contradicts the assumption that they are in the grounded extension, because it is conflict-free). \par 
For the second item, it suffices to show that under both assumptions (adopting the definition of restricted rebuttal and assuming $\sow=\argset$), we are just working with an instance of \emph{well-defined} ASPIC$^{+}$ frameworks (one constructed over a knowledge base where the set of ordinary premisses is empty), which is guaranteed to satisfy all \cite{caminada2007evaluation}'s rationality postulates (see \cite[Section 3.3]{aspichbchapter} for details).\end{proof}
\section{Dynamics of information}\label{sec:dyn}
 The current framework can throw some light on the relations between dynamics of information, argumentation and doxastic attitudes. We can distinguish several kinds of actions, that have different potential effects on basic and argument-based beliefs. The framework naturally allows for the use of tools imported from \emph{dynamic epistemic logic} (DEL) \cite{hans2007}. In particular, we can describe these actions using dynamic modalities, for which complete axiomatisations can be then provided by finding a full list of \emph{reduction axioms} \cite{kooi2007expressivity,hans2007,wang2013axiomatizations}. In order to do so, one first need to show that the rule of replacement of proved equivalents is sound (it preserves validity) in the extended language (see \cite{kooi2007expressivity} for details). Although this is \emph{not} the case in $\lanab$, as it happens with other languages containing awareness operators \cite{FH1987,grossi2009twelve}, we can restrict the domain of application of the rule, and it still does the job for axiomatizing certain dynamic extensions. More precisely, we will work with the rule: 
 \begin{description}
\centering
 \item[(RE)\label{re}] From $\varphi \leftrightarrow \psi$, infer $\delta \leftrightarrow \delta [\varphi/\psi]$,
 \end{description}
  with $\delta[\varphi/\psi]$ the result of replacing one or more non-$\star$ occurrences of $\psi$ in $\delta$ by $\varphi$.\footnote{A non-$\star$ of $\psi$ in $\delta$ is just an occurrence of $\psi$ in $\delta$ where $\psi$ is not inside the scope of $\star \in \{\ow, \conc, \well,\undercut\}$. Note that we assume that $\varphi$ is inside the scope of $\conc$ in the formula $\conc(\alpha)=\varphi$.}
 Semantically, this amounts to showing that each of the actions $\mathsf{act}$ we are about to discuss is well defined, in the sense that whenever we compute $M^{\mathsf{act}}$ (the result of executing action $\mathsf{act}$ in model $M$), we stay in the intended class of models. When this does not happen, as it is the case with many DEL actions (e.g. public announcements \cite{hans2007}), one needs to find a set of \emph{preconditions} for the action. Preconditions works as sufficient conditions for the action to be ``safe'' i.e., to secure that after executing it, we stay in the intended class of models. \par

 Let us start by defining four different actions. Let $\lanab=(\forset,\argset)$ be given, let $M = (W,\belset, \sow, \rulset,\nf, ||\cdot||)$ be an $\lanab$-model, let $\alpha \in \argset$, let $\rul\in \seq$, and let $\varphi \in \forset$. We define:

\begin{itemize}

 \item The act of \textbf{acquiring argument $\alpha$} (resp. \textbf{forgetting argument $\alpha$}) produces the model $  M^{\alpha+!}:=(W,\belset, \sow^{\alpha+!}, \rulset,\nf, ||\cdot||)$, where $\sow^{\alpha+!}:=\sow \cup\{\alpha\}$ (resp. $  M^{\alpha-!}:=(W,\belset, \sow^{\alpha-!}, \rulset,\nf, ||\cdot||)$, where $\sow^{\alpha-!}:=\sow \setminus\{\alpha\}$).
 \item The act of \textbf{accepting the defeasible rule $\rul$} produces the model $M^{\rul +!}:=(W,\belset, \sow, \rulset^{\rul +!},\nf, ||\cdot||)$, where $\rulset^{\rul +!}:=\rulset \cup\{R\}$.
 \item The act of \textbf{publicly announcing} $\varphi$ produces the model $M^{\varphi!}:=(W^{\varphi !},\belset^{\varphi!}, \sow, \rulset,\nf, ||\cdot||^{\varphi!})$, where $W^{\varphi !}:= W \cap||\varphi||_{M}$; $\belset^{\varphi!}:=\belset \cap||\varphi||_{M}$; and $||p||_{M}^{\varphi !}:= ||p||_{M}\cap ||\varphi||_{M}$ for every atom $p$.
 \end{itemize} 

\paragraph{Interpretation.}  Note that the definition is far from being exhaustive, we analyse them because they are natural adaptations of other actions studied in the literature \cite{grossi2009twelve, hans2007}. The most basic argumentative change we can think of consists in adding an argument into the awareness of the agent. Informally, this can be thought as the result of a communicative event (e.g. an opponent advancing an argument), learning (the agent reading an argument in a book), or as the result of reflection (the own agent constructing an argument). Formally, the action is a direct generalization of the ``consider'' action defined for sentences in \cite{grossi2009twelve,van2010dynamics}. Its straightforward counterpart is the act of forgetting an argument (i.e. dropping it from the awareness of the agent). As for the action $(\cdot)^{\rul+!}$, defeasible rules can also be learnt in different ways. For instance, an agent can learn the rule $((\mathsf{Bird}),\mathsf{Flies})$ because an ornithologist told her, because she observed repeatedly that birds fly, or because she read it in a textbook. Finally, public announcements are probably the most studied action in DEL (see e.g. \cite[Chapter 4]{hans2007}). This kind of announcements are supposed to be truthful and coming from a completely reliable source.\par \smallskip

We now define a \textbf{dynamic language}, in order to talk about the different actions. Let $\lanab=(\forset,\argset)$ be a language, formulas of the extended language $\lanab^{!}=(\forset^{!},\argset)$ are given by:
$$\varphi::=\psi \mid \lnot \varphi \mid (\varphi \land \varphi) \mid [\alpha+!] \varphi \mid [\alpha-!] \varphi\mid [\rul +!]\varphi \mid [\psi!] \varphi \qquad  \psi \in \forset,\alpha\in \argset,\rul \in \seq\text{.}$$

Let $[\mathsf{act}]$ be any of the dynamic modalities we have just defined, we use $\ang{\mathsf{act}}$ as an abbreviation of $\lnot[\mathsf{act}]\lnot$, with $\ang{\mathsf{act}} \varphi$ informally meaning that action $\mathsf{act}$ can be executed and after executing it, $\varphi$ holds. \par 

Note that the class of all models $\mathcal{M}$ is \emph{not} closed under all defined actions. In particular, it is not closed under $(\cdot)^{\rul+!}$ nor under $(\cdot)^{\varphi!}$. For the former, the reason is that only rules that are consistent and non-deductive can be learnt as defeasible (see the definition of model in Section \ref{seclog}). For the latter, only truthful formulas that do not trivialize the beliefs of the agent (in the sense of making $\belset$ empty), can be announced. This inconvenience is solved by fixing \textbf{preconditions} (expressible in $\lanab$) for both actions. Let $\rul=\schrul\in \seq$ and $\varphi \in \forset$, we define:

\par

 \noindent $\pre(\rul):=\lnot \well(\ang{\ang{\varphi_1},...,\ang{\varphi_n},\ang{\varphi}\sto \perp}) \land \lnot \well(\ang{\ang{\varphi_1},...,\ang{\varphi_n} \sto \varphi}) $; and
 \\
  $\pre(\varphi !):=\varphi \land \lozenge \varphi$.
\par
It is almost immediate to check that, for any pointed model $(M,w)$, any $\rul=\schrul  \in \seq$, and any $\varphi \in \forset$ we have that: 
\begin{center}

($\{\varphi_1,...,\varphi_n,\varphi\}\nvdash_{0}\perp$ and $\{\varphi_1,...,\varphi_n\}\nvdash_{0}\varphi$) iff $M,w\models \pre(\rul)$;
and \\
($w \in ||\varphi||_{M}$ and $||\varphi||_{M}\cap \belset \neq \emptyset)$ iff $M,w\models \pre(\varphi !)$. 
\end{center}
 Moreover, note that $M,w\models \pre(\rul)$ iff $M,u\models \pre(\rul)$ for every $u \in W$. Let $(M,w)$ be a pointed model with $M=\model$, we define the truth clause for the new kind of formulas: \smallskip
 \begin{center}

 \begin{tabular}{r c l}
 
$M,w\models [\sigma+!]\varphi$ & iff  & $M^{\sigma+!},w\models \varphi$, \\
$M,w\models [\sigma-!]\varphi$ & iff  & $M^{\sigma-!},w\models \varphi$, \\
$M,w\models [\rul +!]\varphi$ & iff & $M,w\models \pre(\rul)$, implies $M^{\rul+!},w\models \varphi$, \\
$M,w\models [\varphi!] \psi$ &iff & $M,w\models \pre(\varphi!)$ implies $M^{\varphi!},w \models \psi\text{.}$ \\
\end{tabular}
 \end{center}
 \smallskip
 Finally, we establish a completeness result for $\lanab^{!}$ w.r.t. $\mathcal{M}$. Note that in Table \ref{tab:redaxgen}, $\pm$ denotes an arbitrary element of $\{+,-\}$.
  \begin{prop}\label{propcompdyn}The proof system $\mathsf{L}^{!}_{\mathsf{BA}}$ that extends the one of Table \ref{tab:axioms} with all axioms of Table \ref{tab:redaxgen} and it is closed under \nameref{re} is sound and complete for $\lanab^{!}$ w.r.t. $\cal M$. \end{prop}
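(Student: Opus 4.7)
The plan is the standard reduction-axiom strategy for dynamic epistemic logic. First I would check soundness: verify semantically that each reduction axiom in Table~\ref{tab:redaxgen} is valid, and that the restricted rule \nameref{re} preserves validity. For the actions $(\cdot)^{\alpha+!}$ and $(\cdot)^{\alpha-!}$, closure of $\mathcal{M}$ under the action is immediate. For $(\cdot)^{\rul+!}$ and $(\cdot)^{\varphi!}$ one uses the preconditions $\pre(\rul)$ and $\pre(\varphi!)$: the excerpt already shows that these precisely characterise the conditions needed to stay in $\mathcal{M}$, so under them the updated tuple is again a model, and without them the corresponding box-formula is vacuously true by the truth clauses.

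Second, I would define a translation $t:\forset^{!}\to \forset$ by pushing each dynamic modality inward according to the reduction axioms, and show by induction on a suitably chosen complexity measure (for instance, a weighted count where the weight assigned to $[\mathsf{act}]\varphi$ strictly dominates the weight of each formula appearing on the right-hand side of the corresponding reduction axiom) that this process terminates in a formula of $\forset$. By the reduction axioms and \nameref{re}, one then proves $\vdash_{\mathsf{L}^{!}_{\mathsf{BA}}}\varphi \leftrightarrow t(\varphi)$ by induction on the structure of $\varphi$. Completeness follows by the familiar chain: if $\models \varphi$, then by the soundness of the translation $\models t(\varphi)$, whence by Theorem~\ref{compstatic} $\vdash_{\mathsf{L}^{\mathsf{BA}}} t(\varphi)$, and so $\vdash_{\mathsf{L}^{!}_{\mathsf{BA}}}\varphi$.

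The main obstacle will be the restricted nature of \nameref{re}. Because $\lanab$ contains the syntax-sensitive operators $\ow,\conc,\well,\undercut$, the unrestricted rule of replacement of proved equivalents is unsound, so one must ensure that every application needed inside the translation concerns only non-$\star$ occurrences. Concretely, the reduction axioms for box-formulas whose static part is of the form $\ow(\alpha)$, $\conc(\alpha)=\varphi$, $\strict(\alpha)$, or $\undercut(\alpha,\beta)$ must be purely combinatorial on the syntactic shape of the action and the argument, not by recursion through the argument. For example, $[\alpha+!]\ow(\beta)$ should reduce to a disjunction that depends only on whether $\beta=\alpha$; and the announcement $[\varphi!]$ commutes trivially with all syntactic operators since $\sow,\rulset,\nf$ are untouched. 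A secondary delicate point is that under $(\cdot)^{\rul+!}$ the set $\welset^{M}$ genuinely changes, so the reduction for $[\rul+!]\well(\alpha)$ must capture exactly how $\welset^{M^{\rul+!}}$ extends $\welset^{M}$; since $\well$ is defined by recursion on argument structure, this is where most of the bookkeeping for the termination argument will live, and where I would expect the verification to be most laborious.
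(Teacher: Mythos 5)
Your proposal follows essentially the same route as the paper's proof: soundness via validity of the reduction axioms plus the validity-preserving character of \nameref{re}, and completeness via an inward-pushing translation into $\forset$ whose termination rests on the decrease of formula complexity (and, for $[\rul+!]\well(\alpha)$, of argument complexity), concluding by Theorem~\ref{compstatic}. The subtleties you flag --- the restricted replacement rule and the recursive reduction of $\well$ under rule acquisition --- are exactly the points the paper singles out as well.
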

 \begin{proof}
 Soundness follows from the validity of all axioms and the validity-preserving character of \nameref{re} in the extended language. Completeness follows from the usual reduction argument. In short, note that in the right-hand side of all axioms of Table \ref{tab:redaxgen}, either the dynamic operator disappears or it is applied to a less complex formula than in the left-hand side. In the case of reduction axioms for $[\rul!+] \well(\alpha)$, either there are no dynamic modalities occurring in the right-hand side of the equivalence or they are applied to $\well$-formulas with less complex \emph{arguments} than in the right-hand side. Therefore, we can define a meaning-preserving translation from $\forset^{!}$ to $\forset$ that, together with Theorem \ref{compstatic}, provides the desired result. The validity-preserving character of \nameref{re} in the extended language w.r.t. $\mathcal{M}$ takes care of formulas with nested dynamic modalities. The reader is referred to \cite{kooi2007expressivity} for details.  \par 
 \end{proof}
 We close this section by modelling a toy example, inspired by \cite{toulmin1958}, and illustrating how actions affect argument-based beliefs.
 Suppose that an agent is wondering whether another agent, Harry, is a British subject ($\mathsf{br}$). Suppose that the only basic-explicit belief she holds at the beginning is that \emph{Harry was born in Bermuda} ($\mathsf{be}$). Other pieces of relevant information are: \emph{Harry's parents are aliens} ($\mathsf{a}$), and that \emph{the rule ``If Harry is born in Bermuda, then he is presumably a British subject'' is applicable} ($\mathsf{r1}$). Let $R_1=((\mathsf{be}),\mathsf{br})$. We start with the model $M_0=\model$, where $W=\belset=\{w_0,w_1,w_2,w_3\}$, $\rulset=\emptyset$, $\sow=\{\ang{\mathsf{be}}\}$, $\nf(R_1)=\mathsf{r1}$, $||\mathsf{be}||=W$, $||\mathsf{br}||=||\mathsf{r1}||=\{w_0,w_2\}$, and $||\mathsf{a}||=\{w_0,w_1\}$. It is then easy to check that $M_0,w_0\models \square^{e} \mathsf{be}$. Moreover, we have that $M_0,w_0\models [R1+!][\alpha^{R_1}+!]\mathsf{B}(\alpha^{R_1},\mathsf{br})$. In words, after learning the rule $R_1$ and becoming aware of the simplest argument using it, i.e. $\ang{\ang{\mathsf{be}}\Rightarrow \mathsf{br}}$, the agent has an argument-based belief that Harry is a British subject. If, however the agent learns subsequently from a completely trustworthy source that Harry's parents are alien ($\mathsf{a}$), together with the rule $R_2=((\mathsf{a}),\lnot \mathsf{r1})$, and the argument $\ang{\ang{\mathsf{a}} \Rightarrow \lnot \mathsf{r1}}$, then she revises her argument-based belief about Harry's nationality. In symbols,
$M_0,w_0\models [R1+!][\alpha^{R_1}+!][\mathsf{a}!][R_2+!][\alpha^{R_2}+!]\lnot \mathsf{B}(\alpha^{R_1},\mathsf{br})$.
\begin{table}[h!]
\centering
\begin{tabular}{l | l}
\hline
 $[\alpha\pm!]p \leftrightarrow p$ & $[\varphi!]p \leftrightarrow (\pre(\varphi !)\to p)$ \\
 $[\alpha\pm!]\lnot \varphi \leftrightarrow \lnot [\alpha\pm!] \varphi $ & $[\varphi!]\lnot \psi \leftrightarrow (\pre(\varphi !)\to \lnot [\varphi!] \psi)$ \\
$[\alpha\pm!](\varphi \land \psi) \leftrightarrow ([\alpha\pm!] \varphi \land [\alpha\pm!] \psi)$ &  $[\varphi!](\delta \land \psi) \leftrightarrow ( [\varphi!] \delta \land [\varphi!] \psi)$\\
 $[\alpha\pm!]\square \varphi \leftrightarrow \square[\alpha\pm!]\varphi  $ & $[\varphi!]\square \psi \leftrightarrow (\pre(\varphi !)\to \square[\varphi!] \psi)$ \\
$[\alpha\pm!]\ow(\beta) \leftrightarrow\ow(\beta)$ for $\alpha\neq \beta$ & $[\varphi!] \ow(\beta)\leftrightarrow (\pre(\varphi !)\to \ow(\beta))$ \\
$[\alpha +!]\ow(\alpha) \leftrightarrow\top$ & \\
$[\alpha -!]\ow(\alpha) \leftrightarrow\perp$ & \\
 $[\alpha\pm!]\conc(\beta) =\varphi \leftrightarrow\conc(\beta) =\varphi $ &  $[\varphi!] \conc(\beta)=\psi \leftrightarrow (\pre(\varphi !)\to \conc(\beta)=\psi)$ \\
$[\alpha\pm!]\strict(\beta) \leftrightarrow\strict(\beta) $ & $[\varphi!]\strict(\beta)\leftrightarrow (\pre(\varphi !)\to \strict(\beta))$ \\
 $[\alpha\pm!]\undercut(\beta, \gamma)\leftrightarrow\undercut(\beta, \gamma)$ &  $[\varphi!]\undercut(\beta,\gamma)\leftrightarrow (\pre(\varphi !)\to \undercut(\beta,\gamma))$\\
 $[\alpha\pm!]\well(\beta)  \leftrightarrow\well(\beta)$ & $[\varphi!]\well(\beta)\leftrightarrow (\pre(\varphi !)\to \well(\beta))$ \\ 
\\
\hline 

\end{tabular}

\begin{tabular}{l l}

  $[\rul+!]p \leftrightarrow (\pre(\rul)\to p)$ & $[\rul+!] \ow(\alpha)\leftrightarrow (\pre(\rul)\to \ow(\alpha))$ \\ 
  $[\rul+!]\lnot \varphi \leftrightarrow (\pre(\rul)\to \lnot [\rul+!] \varphi)$ & $[\rul+!] \conc(\alpha)=\varphi \leftrightarrow (\pre(\rul)\to \conc(\alpha)=\varphi)$ \\
 $[\rul+!](\varphi \land \psi) \leftrightarrow ( [\rul+!] \varphi\land [\rul+!] \psi)$ & $[\rul+!]\strict(\alpha)\leftrightarrow (\pre(\rul)\to \strict(\alpha))$  \\
  $[\rul+!]\square \varphi \leftrightarrow \square[\rul+!] \varphi$ & $[\rul+!]\undercut(\alpha,\beta)\leftrightarrow (\pre(\rul)\to \undercut(\alpha,\beta))$ \\

& \\

 \multicolumn{2}{l}{$[\rul+!]\well(\ang{\varphi})\leftrightarrow \top$} \\
  \multicolumn{2}{l}{$[\rul +!]\well(\ang{\ang{\varphi_1},...,\ang{\varphi_n}\sto \varphi})\leftrightarrow \big(\pre(\rul)\to \well(\ang{\ang{\varphi_1},...,\ang{\varphi_n}\sto \varphi})\big)$} \\
  \multicolumn{2}{l}{$[\rul +!]\well(\alpha^{\rul})\leftrightarrow \top$} \\
  \multicolumn{2}{l}{$[\rul +!]\well(\alpha^{\rul'})\leftrightarrow \big(\pre(\rul)\to\well(\alpha^{\rul'})\big)$ \quad whenever $\rul\neq \rul'$} \\
 \multicolumn{2}{l}{$[\rul+!] \well(\ang{\alpha_1,...,\alpha_n \hookrightarrow \varphi}) \leftrightarrow $} \\
\multicolumn{2}{c}{ $\leftrightarrow \Big( \pre(\rul) \to \big( \bigwedge_{1\leq i \leq n} [\rul +!]\well(\alpha_i) \land [\rul+!]\well(\ang{\ang{\sconc (\alpha_1)},...,\ang{\sconc (\alpha_n)}\hookrightarrow \varphi})\big)\Big)$}\vspace{0.4mm}\\

\hline
\end{tabular}
\caption{Reduction axioms for $\lanab^{!}$.}
\label{tab:redaxgen}
\end{table}



\section{Concluding remarks}

\paragraph{Closely related work.} From all the works we have commented throughout the paper, it seems that \cite{grossi2009twelve,grossi2015syntactic} and \cite{shi2017argument} are the closest one to our approach. Regarding \cite{grossi2009twelve,grossi2015syntactic}, we have somehow generalize their \emph{awareness of rules} to our awareness of arguments (abstracting away from other forms of awareness treated there). As for \cite{shi2017argument}, their choice of modelling arguments semantically (as opens of a topology), permits a transparent axiomatisation of their notion of argument-based beliefs, which is easily guaranteed to be consistent (two of the weaknesses of our approach). On the other hand, we naturally treat arguments as first-class citizens in our language, and the argument-based beliefs of our agent escape from every form of logical omniscience (while the beliefs of \cite{shi2017argument}'s agent are still closed under equivalent formulas). \par 
\paragraph{Future work.} There are natural open paths for future work. An urgent task in the development of the logical aspects of the framework consists in axiomatizing (if possible) the argument-based belief operator $\argb(\cdot,\cdot)$. Moreover, the modal semantic apparatus of our models could be extended to plausibility structures \cite{baltag2008}, so as to model fine-grained preference between arguments, based on the agent's basic epistemic attitudes toward the premisses of the involved arguments (e.g. known premisses are to be preferred to strongly believed premisses, and the latter, in turn, are to be preferred to merely believed premisses). Finally, a multi-agent extension of the current framework could be used to model argument exchange in different kinds of scenarios (e.g. deliberation, persuasion dialogues or inquiry).

\black
\bibliographystyle{eptcs}
\bibliography{tesis}

\section*{Appendix (Proof sketch of Theorem \ref{compstatic})}\label{sec:app}
\addcontentsline{toc}{section}{\nameref{sec:app}}

The outline of the proof is as follows: we first define a new class of (non-standard) models for our language (which are Kripke models where the syntactic components 
--awareness, accepted rules and names of rule-- are maintained throughout the accessibility relation). We then show two things: (i) we can go from pointed Kripke models to its generated submodels without loosing $\lanab$-information (just as in the general modal case) and; (ii) we can transform systematically Kripke generated submodels into our models (again, without loosing $\lanab$-information). Finally, we prove completeness w.r.t. the class of non-standard models and apply (i) and (ii) to obtain the desired result. Let us unfold some of the details. \par
 First of all, we define a \textbf{Kripke model for $\lanab=(\forset,\argset)$} as a tuple $S=(W,\krel, \sow, \rulset,\nf, ||\cdot||)$ where: $W\neq \emptyset$ is a set of \emph{possible worlds}; $\krel \subseteq W \times W$ is a serial, transitive and euclidean relation; $\sow:W\to \wp(\argset)$ is a function assigning an awareness set $\sow(w)$ to each world $w$; $\rulset: W\to \wp(\seq)$ (with $n\in \mathbb{N}$) is a function assigning a set of \emph{accepted defeasible rules} $\rulset(w)$ to each world $w$; $\nf: (W\times \seq) \to \oms$ is a (possibly partial) \emph{naming function} for defeasible rules, where $\nf(w,R)$ informally means ``the defeasible rule $R$ is applicable at $w$''; and $||\cdot||:\oms \to \wp(W)$ is a valuation function.  Moreover, we assume that for every $w,w'\in W$, $w\krel w'$ implies $\sow(w)=\sow(w')$, $\rulset(w)=\rulset(w')$, and $\nf(w,R)=\nf(w',R)$. We also assume that if $((\varphi_1,...,\varphi_n),\varphi) \in \rulset(w)$, then $\{\varphi_1,...,\varphi_n,\varphi\}\nvdash_0 \perp$ and $\{\varphi_1,...,\varphi_n\}\nvdash_0 \varphi$.

\par

 Note that now $\welset$ sets depend on both the model and the world at which we are looking (since $\rulset$ may vary from one world to another). Consequently, we use $\welset^{S}(w)$ to denote the set of well-shaped arguments at $(S,w)$.\par  Truth w.r.t. pointed Kripke models is denoted by $\kt$ and defined as follows (the missing clauses are as expected): 

 \begin{center}
 
\begin{tabular}{r c l}

$S,w\kt \square \varphi$ & iff & $w\krel v$ implies $S,v\kt \varphi$ \\
$S,w\kt \ow(\alpha)$ & iff & $\alpha\in \sow(w)$ \\
 $S,w\kt \well(\alpha)$ & iff &$\alpha \in \welset^{S}(w)$ \\
  
  $S,w\kt \undercut(\alpha,\beta)$ & iff & $\beta=\ang{\beta_1,...,\beta_n\Rightarrow \varphi}$ and $\sconc(\alpha)=\lnot \nf (w,\trule(\beta))$.
  
 \end{tabular} 
 \end{center} \par
We say that a Kripke model $S=\kmodel$ is \textbf{uniform} iff for every $w,w'\in W$ it holds that: (i) $\sow(w)=\sow(w')$; (ii) $\rulset(w)=\rulset(w')$; and (iii) $\nf(w,R)=\nf(w',R)$ for every $R\in \rulset$. $\kmclass$ denotes the class of all pointed Kripke models, and $\kmclass^{u}$ denotes the class of all uniform pointed Kripke models. We abuse of notation and use $\mathcal{M}$ to denote the class of all \emph{pointed} models (the standard ones that we defined in Section \ref{seclog}). 
\paragraph{Transformation lemmas.} Now, we need a couple of lemmas. The first one says that we can go from pointed Kripke models to Kripke uniform pointed models without loosing $\lanab$-information, by taking generated submodels. We use $S^{w}$ to denote the submodel of $S$ generated by $w$ (see \cite[Chapter 2]{blackburn2002}).

\begin{lemma}\label{lem:trans1} Let $(S,w)\in \kmclass$. We have that: \begin{itemize}
\item[i)] $(S^{w},w) \in \kmclass^{u}$,

i.e. each pointed-generated submodel of a Kripke model is a uniform Kripke model.
\item[ii)] For every $\varphi \in \forset$, $(S,w)\kt\varphi \quad \text{iff} \quad (S^{w},w)\kt\varphi$, i.e. truth is preserved under generated submodels. \black
\end{itemize} \end{lemma}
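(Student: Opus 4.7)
The plan is to treat (i) and (ii) separately, using the ``connecting hypothesis'' that was built into the definition of Kripke models for $\lanab$, namely that $w \krel w'$ forces agreement of $\sow$, $\rulset$ and $\nf$ on $w$ and $w'$. Once that is in hand, (i) is essentially a propagation argument along the accessibility relation, and (ii) is the usual generated-submodel induction, with a case analysis over the non-standard operators of $\lanab$.

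For (i), I would first recall that $S^w$ has as domain $W^w := \{v \in W \mid w \krel^{*} v\}$, where $\krel^{*}$ denotes the reflexive-transitive closure of $\krel$, with $\krel$, $\sow$, $\rulset$, $\nf$ and $||\cdot||$ restricted accordingly. I would then prove, by induction on the length $n$ of a $\krel$-path from $w$ to $v$, the claim that $\sow(v) = \sow(w)$, $\rulset(v) = \rulset(w)$, and $\nf(v,R) = \nf(w,R)$ for every $R \in \seq$. The base case $n=0$ is trivial. For the inductive step, if the claim holds for $u$ and $u \krel v$, then by the built-in invariance clause on Kripke models we have $\sow(u) = \sow(v)$, $\rulset(u) = \rulset(v)$, and $\nf(u,R) = \nf(v,R)$, and chaining with the induction hypothesis gives the result. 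Since this gives a common value at every point of $W^w$, the generated submodel is uniform.

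For (ii), I would proceed by induction on the complexity of $\varphi \in \forset$. The atomic and Boolean cases are standard. The case $\square \psi$ follows the textbook argument: the successors of any $v \in W^w$ in $S$ and in $S^w$ coincide, so the quantification over belief-accessible worlds is preserved. For $\ow(\alpha)$, $\conc(\alpha)=\psi$, $\strict(\alpha)$ and $\undercut(\alpha,\beta)$ the argument reduces to observing that each truth clause only consults $\sow(v)$, syntactic data about $\alpha$ and $\beta$, or $\nf(v,\trule(\beta))$, all of which are preserved by restriction to $W^w$. The one case that merits some care is $\well(\alpha)$: here the truth value depends on the set $\welset^{S}(v)$, which in turn is defined recursively from $\rulset(v)$. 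However, since $\rulset^{S}(v) = \rulset^{S^w}(v)$ for every $v \in W^w$, a straightforward side-induction on the structure of $\alpha$ shows $\welset^{S}(v) = \welset^{S^w}(v)$, so the biconditional holds.

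The main (very mild) obstacle is bookkeeping around the $\well$ clause, because one must unfold the recursive definition of $\welset^{S}(v)$ and check that the condition $((\sconc(\alpha_1),\dots,\sconc(\alpha_n)),\varphi) \in \rulset(v)$ lifts from $S$ to $S^w$ and back. Once uniformity is secured by (i) and the $\rulset$-preservation observation is in place, everything else in (ii) is the standard modal preservation argument extended mechanically to the syntactic operators.
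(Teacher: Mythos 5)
Your proposal is correct and follows the same route the paper indicates: item (i) by propagating the built-in invariance of $\sow$, $\rulset$ and $\nf$ along $\krel$-paths, and item (ii) by the standard induction on $\varphi$, with the only delicate case being $\well(\alpha)$, which needs the side induction on the structure of $\alpha$ that you identify. The paper's own proof is just a two-sentence sketch, so your write-up is simply a faithful elaboration of it.
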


Item \emph{i)} follows easily from the definition of generated submodel and uniform Kripke model. Item \emph{ii)} can be proved by induction on $\varphi$.\par 

The second lemma says that we can go from Kripke uniform models to our models (the standard ones, defined in Section \ref{seclog}) without loosing $\lanab$-information.

\begin{lemma}\label{lem:trans2} For every uniform pointed Kripke model $(S,v) \in \kmclass^{u}$, there is a pointed model $(M,w)\in \mathcal{M}$ s.t. for every $\varphi \in \forset$: $$ S,v\kt \varphi \quad \text{iff}\quad M,w\models \varphi\text{.}$$\end{lemma}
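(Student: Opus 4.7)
The plan is to build $M$ by keeping the domain, valuation, and syntactic data of $S$ untouched and replacing the relational component $\krel$ by a single doxastic set. Concretely, given $S = \kmodel \in \kmclass^u$ and the designated world $v$, I would set $w := v$ and define $M := (W, \belset, \sow(v), \rulset(v), \nf(v,\cdot), ||\cdot||)$, where $\belset := \{u \in W : v \krel u\}$. Uniformity of $S$ makes the parameters $\sow(v)$, $\rulset(v)$, and $\nf(v,\cdot)$ independent of the chosen representative, and seriality of $\krel$ yields $\belset \neq \emptyset$. The defeasible-rule constraints (propositional consistency and non-validity) transfer verbatim from $\rulset(v)$, so $M$ is a legitimate element of $\mathcal{M}$.

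The core claim I intend to establish, by induction on $\varphi \in \forset$, is the slightly strengthened equivalence that for every $u \in \{v\} \cup \belset$, $S, u \kt \varphi$ iff $M, u \models \varphi$; setting $u = v$ then yields the lemma. The Boolean clauses and the atomic-propositional case are routine. The clauses $\conc(\alpha) = \varphi$ and $\strict(\alpha)$ coincide literally between the two semantics since they never consult the worldly data. For $\ow(\alpha)$, $\well(\alpha)$, and $\undercut(\alpha, \beta)$, uniformity guarantees that at every $u \in W$ the local data $\sow(u)$, $\rulset(u)$, and $\nf(u,\cdot)$ agree with the constants used in $M$; in particular $\welset^S(u) = \welset^M$, and the truth conditions of $S$ and $M$ become identical.

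The only substantive case is $\square \psi$, and I expect it to be the main technical point, though it dissolves via a standard KD45 observation. The claim is that for every $u \in \{v\} \cup \belset$, the set of $\krel$-successors of $u$ equals $\belset$. For $u = v$ this holds by definition; for $u \in \belset$, applying euclideanness to $v \krel u$ and $v \krel u'$ gives $u \krel u'$, while applying transitivity to $v \krel u$ and $u \krel u''$ gives $u'' \in \belset$. Both $S, u \kt \square \psi$ and $M, u \models \square \psi$ therefore unfold into the assertion that $\psi$ holds at every element of $\belset$, and since $\belset \subseteq \{v\} \cup \belset$ the induction hypothesis closes the step. No further cases arise, so the strengthened equivalence follows, and specializing to $u = v$ gives exactly the statement of the lemma.
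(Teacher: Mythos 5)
Your proof is correct and follows essentially the same route as the paper: the paper's transformation $\tau$ likewise replaces $\krel$ by the successor set of the designated world and uses uniformity to collapse $\sow$, $\rulset$ and $\nf$ to constants, the only cosmetic difference being that the paper also trims the domain to $\{v\}\cup\krel[v]$ and restricts the valuation, whereas you keep all of $W$ and compensate with the strengthened induction hypothesis over $\{v\}\cup\belset$. Your explicit KD45 argument that every $u\in\{v\}\cup\belset$ has successor set exactly $\belset$ is the right (and needed) justification for the $\square$ case, which the paper leaves implicit.
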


Let us define the function $\tau$ for each uniform Kripke model as follows $\tau(S,w)=(\tau(S),\tau(w))$ where $\tau(w)=w$ and $\tau(S)=(\tau(W),\tau(\krel),\tau(\sow),\tau(\rulset),\tau(\nf),\tau(||\cdot||))$ s.t.: 
\begin{center}

\begin{tabular}{c}

$\tau(W):=\{w\}\cup \krel[w]$, \\
$\tau(\krel):=\krel[w]$, \\
$\tau(\sow):=\sow(w)$, \\
$\tau(\rulset):=\rulset(w)$,\\
$\tau(\nf):=\{(R,p) \in \seq \times \oms \mid \nf(w,R)=p\}$, \\
$\tau(||p||):= ||p||\cap \tau(W) $ for every $p\in \oms\text{.}$

\end{tabular}
\end{center} \par \smallskip

Now, it is easy to check that, for every $(S,w)\in \kmclass^{u}$: $\tau((S,w))\in \mathcal{M}$, that is $\tau:\kmclass^{u}\to \mathcal{M}$. Once this is done, we can show that, for every $\varphi \in \forset$, it holds that: 
\begin{center}
$S,w\kt \varphi \quad \text{iff} \quad \tau(S,w)\models \varphi\text{.}$
\end{center}
The proof of the last assertion is by induction on $\varphi$ where the step for $\varphi=\well(\alpha)$ is another inductive argument (on the construction on $\alpha$).
\paragraph{Completeness w.r.t. Kripke models.} We can now define \textbf{the canonical Kripke} model for $\lanab$ as: \begin{center}
$\canmod=(W^{c},\krel^{c}, \sow^{c}, \rulset^{c},\nf^{c}, ||\cdot||^{c})\text{,}$
\end{center}  where the definition of  $W^{c}$, $\krel^{c}$ y $||\cdot||^{c}$ is as usual in modal logic \cite{blackburn2002}; while the definition of the rest of the elements mimics the one of awareness operators \cite{FH1987}:\begin{center}
$\sow^{c}(\Gamma):=\{\alpha \in \argset \mid \ow(\alpha)\in \Gamma\}\text{,}$

$\rulset^{c}(\Gamma):=\{\rul \in \seq \mid  \well(\alpha^{\rul}) \in \Gamma\}\text{,}$

$((\Gamma,\rul) ,p)\in \nf^{c} \quad \text{iff}\quad \undercut(\la \lnot p \ra, \alpha^{\rul})\in \Gamma\text{.} $
\end{center}
Now, we need to prove:

\begin{lemma}[Canonicity]\label{canlem} $\canmod$ is a Kripke model for $\lanab$. \end{lemma}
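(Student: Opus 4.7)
The plan is to verify, clause by clause, that $\canmod$ meets every requirement in the definition of a Kripke model for $\lanab$. Since $W^{c}$, $\krel^{c}$ and the atomic valuation $||\cdot||^{c}$ are built in the standard way from maximally consistent sets, the genuinely new content lies in four checks: the frame conditions on $\krel^{c}$; the \emph{uniformity} conditions tying $\sow^{c}$, $\rulset^{c}$ and $\nf^{c}$ to $\krel^{c}$; the two closure constraints on every rule in $\rulset^{c}(\Gamma)$ (consistency and non-deductiveness); and the fact that $\nf^{c}$ is a (possibly partial) function, not merely a relation.

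The first check is routine: non-emptiness of $W^{c}$ follows from the consistency of $\mathsf{L}^{\mathsf{BA}}$ via Lindenbaum's lemma, and seriality, transitivity and euclideanness of $\krel^{c}$ are the usual canonical-model consequences of the $KD45$ axioms (Ax1). The uniformity check uses the introspection axioms Ax2--Ax7 in a uniform way. Concretely, assume $\Gamma \krel^{c} \Delta$. If $\alpha \in \sow^{c}(\Gamma)$, then $\ow(\alpha)\in \Gamma$, so Ax2 gives $\square \ow(\alpha)\in \Gamma$ and hence $\ow(\alpha)\in \Delta$; the reverse direction uses Ax3. The same argument, applied with Ax4--Ax5 to the formula $\well(\alpha^{R})$, gives $\rulset^{c}(\Gamma)=\rulset^{c}(\Delta)$; applied with Ax6--Ax7 to $\undercut(\langle\lnot p\rangle,\alpha^{R})$, it gives $\nf^{c}(\Gamma, R)=\nf^{c}(\Delta, R)$ for every rule $R$ and every $p\in\oms$.

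The step I expect to be the most delicate is the verification of the two closure constraints on accepted rules, which is where the wellshapedness axioms earn their keep. Let $R=((\varphi_{1},\dots,\varphi_{n}),\varphi)\in \rulset^{c}(\Gamma)$, i.e.\ $\well(\alpha^{R})\in \Gamma$. By Ax17 and Ax18 respectively, $\Gamma$ contains both $\lnot\well(\langle \langle\varphi_{1}\rangle,\dots,\langle\varphi_{n}\rangle,\langle\varphi\rangle \sto \perp\rangle)$ and $\lnot\well(\langle \langle\varphi_{1}\rangle,\dots,\langle\varphi_{n}\rangle \sto \varphi\rangle)$. If either $\{\varphi_{1},\dots,\varphi_{n},\varphi\}\vdash_{0}\perp$ or $\{\varphi_{1},\dots,\varphi_{n}\}\vdash_{0}\varphi$ held at the meta-level, Ax12 together with the matching instance of Ax14 would derive the corresponding positive $\well$-formula in $\Gamma$, contradicting its maximal consistency. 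Hence both forbidden derivability conditions fail, which is exactly what the model definition demands of a defeasible rule.

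Finally, Ax19 secures the functionality of $\nf^{c}$: if $\undercut(\langle \lnot p\rangle, \alpha^{R})\in \Gamma$ and $q\neq p$, then $\lnot \undercut(\langle \lnot q\rangle, \alpha^{R})\in \Gamma$, so for each pair $(\Gamma, R)$ at most one $p\in\oms$ satisfies the defining clause of $\nf^{c}$. Combining the four items yields that $\canmod$ is a Kripke model for $\lanab$ in the sense required, completing the lemma.
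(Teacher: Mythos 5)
Your proof is correct and follows essentially the same route as the paper's, which only sketches the argument by pointing to (Ax1) for the frame conditions and (Ax19) for the functionality of $\nf^{c}$; you fill in the remaining verifications (uniformity via the introspection axioms Ax2--Ax7, and the consistency/non-deductiveness constraints on accepted rules via Ax12, Ax14, Ax17 and Ax18) exactly in the spirit the authors intend. No discrepancies to report.
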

For showing that $\canmod$ satisfies all conditions, we reason using maximally-consistent set properties and our axiom system. As illustrations: semantic restrictions on the accessibility relations follows from (Ax1) (see e.g. \cite{FH2004} or \cite{blackburn2002}), while (Ax19) permits showing that $\nf^{c}$ is a function.

\begin{lemma}[Truth]\label{truthlem} For every $\varphi \in \forset$: $\varphi \in \Gamma$ iff $\canmod,\Gamma\kt \varphi$.  \end{lemma}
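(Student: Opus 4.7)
The plan is to proceed by induction on the structure of $\varphi\in \forset$. The atomic, Boolean, and $\square$ cases are entirely standard: the atomic case follows by definition of $||\cdot||^{c}$; the Boolean cases use maximally-consistent-set properties; and the case $\varphi = \square \psi$ follows from the standard modal canonical-model argument using (Ax0), (Ax1), (MP), (Nec) together with the definition of $\krel^{c}$. The case $\varphi = \ow(\alpha)$ is immediate from the definition of $\sow^{c}(\Gamma)$, and the cases $\varphi = \conc(\alpha)=\psi$ and $\varphi=\strict(\alpha)$ are settled by (Ax8)--(Ax11): these axioms force $\conc(\alpha)=\psi \in \Gamma$ iff $\sconc(\alpha)=\psi$, and similarly for $\strict$, exactly matching the semantic clauses (which do not depend on $\Gamma$).

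For $\varphi = \undercut(\alpha,\beta)$ I unpack the semantic clause: I need $\undercut(\alpha,\beta)\in \Gamma$ iff $\beta=\ang{\beta_1,\ldots,\beta_n\Rightarrow \psi}$ and $\sconc(\alpha)= \lnot \nf^{c}(\Gamma,\trule(\beta))$. The forward direction uses (Ax20)--(Ax22) to force $\beta$ to have defeasible top rule and $\sconc(\alpha)=\lnot p$ for some atom $p$, and then (Ax24) transfers the undercut to the canonical shape $\undercut(\ang{\lnot p}, \alpha^{\trule(\beta)})\in \Gamma$, which by definition of $\nf^{c}$ gives $\nf^{c}(\Gamma,\trule(\beta))=p$. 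The converse uses (Ax23) to lift the canonical undercut back to $\undercut(\alpha,\beta)$. Crucially, (Ax19) guarantees well-definedness of $\nf^{c}$ as a partial function, as already observed in the canonicity lemma.

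The main obstacle is the case $\varphi = \well(\alpha)$, which requires a nested induction on the construction of the argument $\alpha$. For $\alpha = \ang{\psi}$, (Ax12) closes one direction and the soundness-like clause in the semantics closes the other, using that $\alpha \in \welset^{S^c}(\Gamma)$ iff $\ang{\psi}$ is trivially well-shaped. For $\alpha = \ang{\alpha_1,\ldots,\alpha_n \sto \psi}$, (Ax13)--(Ax15) give the equivalence between $\well(\alpha)\in \Gamma$ and the conjunction of $\well(\alpha_i)\in \Gamma$ with the (meta-level) fact $\{\sconc(\alpha_i)\} \vdash_{0}\psi$; applying the inner induction hypothesis to each $\alpha_i$ yields the semantic characterisation of $\welset^{S^{c}}(\Gamma)$. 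For the defeasible case $\alpha = \ang{\alpha_1,\ldots,\alpha_n \Rightarrow \psi}$, (Ax16) reduces well-shapedness of $\alpha$ to well-shapedness of its subarguments conjoined with well-shapedness of the minimal argument $\alpha^{R}$ where $R = \trule(\alpha)$; by the definition of $\rulset^{c}(\Gamma)$, $\well(\alpha^{R})\in \Gamma$ iff $R\in \rulset^{c}(\Gamma)$, giving exactly the semantic clause. The role of (Ax17)--(Ax18) is to secure that $\rulset^{c}(\Gamma)$ satisfies the two meta-level constraints (defeasible rules are consistent and not deductively valid), so that $S^{c}$ is indeed a Kripke model in the sense of Lemma~\ref{canlem}; without them the sub-induction would still go through but $S^{c}$ would not lie in $\kmclass$.

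Once the Truth Lemma is established, completeness w.r.t. $\kmclass$ follows by the usual Lindenbaum argument: any $\mathsf{L}^{\mathsf{BA}}$-consistent formula $\varphi$ extends to a maximally-consistent $\Gamma$, hence is satisfied at $(\canmod,\Gamma)$. Applying Lemma~\ref{lem:trans1} to pass to the generated submodel and Lemma~\ref{lem:trans2} to translate into a pointed model in $\mathcal{M}$, both preserving truth of $\lanab$-formulas, yields satisfaction of $\varphi$ in $\mathcal{M}$ and therefore Theorem~\ref{compstatic}.
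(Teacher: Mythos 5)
Your proposal is correct and follows exactly the route the paper takes: induction on $\varphi$, with the atomic/Boolean/modal and purely syntactic cases ($\ow$, $\conc$, $\strict$) handled as standard, the $\undercut$ case resolved via (Ax19)--(Ax24) against the definition of $\nf^{c}$, and the $\well$ case settled by a nested induction on the argument structure using (Ax12)--(Ax16), with (Ax17)--(Ax18) deferred to canonicity. You in fact supply more explicit axiom-by-axiom detail than the paper's own sketch, and all of it checks out.
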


The proof proceeds by induction on $\varphi$. The Boolean and modal cases are standard \cite{blackburn2002}. The cases for operators $\ow$, $\conc$ and $\strict$ are straightforward (they actually do not make use of the induction hypothesis, due to their syntactic character). The cases for $\varphi=\undercut(\alpha,\beta)$ and $\varphi=\well(\alpha)$ are slightly more compromised. For the latter, another inductive argument on the structure of $\alpha$ is required.  \par

\paragraph{Completeness w.r.t. standard models.} Finally, completeness w.r.t. standard models can be proved as follows. Suppose $\Gamma \nvdash \varphi$, then $\Gamma \cup \{\lnot \varphi\}$ is consistent. By Lindenbaum, we have that there is a $\Gamma^{+}\in W^{c}$ s.t $\Gamma \cup \{\lnot \varphi\}\subseteq \Gamma^{+}$. By the Truth Lemma we have that $\canmod,\Gamma^{+}\kt \Gamma \cup \{\lnot \varphi\}$. By item \emph{ii)} of Lemma \ref{lem:trans1} we have that $S^{c\Gamma^{+}},\Gamma^{+}\kt \Gamma \cup \{\lnot \varphi\}$ and by item \emph{i)} we have that $S^{c\Gamma^{+}},\Gamma^{+}$ is a pointed uniform Kripke model. Then by Lemma \ref{lem:trans2} we know that $\tau(S^{c\Gamma^{+}},\Gamma^{+})\models \Gamma \cup \{\lnot \varphi\}$ which implies by definition of semantic logical consequence that $\Gamma \nvDash \varphi$.
\black

\end{document}